\documentclass[sigconf,
authorversion,
nonacm]{acmart}
\usepackage{verbatim}
\usepackage{booktabs}
\usepackage{tabularx}

\usepackage{tikz}
\usetikzlibrary{fit}
\usepackage{braket}

\usepackage{prettyref}
\newrefformat{app}{Appendix~\ref{#1}}
\newrefformat{thm}{Theorem~\ref{#1}}
\newrefformat{cor}{Corollary~\ref{#1}}
\newrefformat{def}{Definition~\ref{#1}}
\usepackage[hyphens]{xurl}

\newcommand\pyreflect{\texttt{PyReflect}}
\newcommand\nixreflect{\texttt{NixReflect}}
\newcommand\Pyreflect{\texttt{PyReflect}}
\newcommand\Nixreflect{\texttt{NixReflect}}

\begin{document}

\title{Bootstrapping Mutual Attestation with Kleene's Second Recursion Theorem}

\author{Takuma Imamura}
\email{takuma.imamura@acompany-ac.com}
\affiliation{
  \institution{Confidential Computing Lab, Acompany Co., Ltd.}
  \city{Nagoya}
  \country{Japan}}
\orcid{0000-0001-6491-9137}

\begin{abstract}
Mutual attestation among nodes with no central trusted operator requires each node to hold reference values (expected code measurements) for its peers.
The na\"ive approach of mutually embedding these reference values in the nodes' code leads to an infinite regress.
We call the problem of resolving this infinite regress the \emph{reference-value bootstrapping problem} for mutual attestation.
Existing solutions avoid this regress by relying on a trusted third party (TTP), externally supplied reference values, or architecture-specific measurement mechanisms.
We instead express the bootstrapping problem as a system of mutual fixed-point equations and solve it by Kleene's second recursion theorem.
The construction produces nodes that mutually reference one another's code and reconstruct every peer's exact source from built-in data alone.
When a deployed source file is measured directly, as with a Python script, a node obtains the peer's reference value by applying the measurement function directly to the reconstructed source.
When a built image is measured, as with AWS Nitro Enclaves, a node instead reproducibly rebuilds the peer's image from the reconstructed source and derives its reference measurement.
For the first case, we develop PyReflect, a Python transpiler, and use it to implement a TPM mutual-attestation PoC.
For the second, we develop NixReflect, a Nix transpiler, and use it in a PoC in which two Nitro Enclaves reproduce each other's reference PCRs from built-in data alone.
Our solution is architecture-independent, requires neither a TTP nor externally supplied reference values, and works with existing attestation stacks unchanged.
\end{abstract}

\begin{CCSXML}
<ccs2012>
   <concept>
       <concept_id>10002978.10003001.10003599.10011621</concept_id>
       <concept_desc>Security and privacy~Hardware-based security protocols</concept_desc>
       <concept_significance>500</concept_significance>
       </concept>
   <concept>
       <concept_id>10003752.10003753.10003754.10003756</concept_id>
       <concept_desc>Theory of computation~Recursive functions</concept_desc>
       <concept_significance>300</concept_significance>
       </concept>
 </ccs2012>
\end{CCSXML}

\ccsdesc[500]{Security and privacy~Hardware-based security protocols}
\ccsdesc[300]{Theory of computation~Recursive functions}

\keywords{Mutual Attestation, Kleene's Recursion Theorem, Trusted Execution Environment, Trusted Platform Module, AWS Nitro Enclaves, Reproducible Build}

\maketitle

\section{Introduction}
\label{sec:intro}

\emph{Remote attestation} plays a crucial role in an isolated execution environment such as a Trusted Execution Environment (TEE), a Trusted Platform Module (TPM), or an AWS Nitro Enclave.
The IETF Remote ATtestation procedureS (RATS) architecture \citep{RATS} organises the process into roles.

An \emph{Attester} measures the target---typically hashing the code and configuration loaded into the isolated component---and emits a signed \emph{attestation report} (evidence).
A \emph{Verifier} first validates the report's authenticity against certificates issued by an \emph{Endorser}, such as the hardware manufacturer, and then checks that the reported measurement matches a \emph{reference value}---the expected measurement---supplied by a \emph{Reference Value Provider}.
Only when both checks pass does the Verifier accept that the target is running the intended software.

This standard picture presupposes a \emph{one-way} attestation: a \emph{Relying Party} (such as a client or key vault) verifies a single Attester, not the other way round.
Consider a system in which several attestable nodes exchange confidential data with one another.
Every node acts as a Verifier for its peers, so before any exchange each node must already know the reference values of all the others.

A na\"ive approach would be to hard-code the peers' reference values into every node.
This does not work: node $A$'s code would embed $B$'s reference value, which is a digest of $B$'s code, which embeds $A$'s reference value, and so on \emph{ad infinitum}.
The mutually dependent values therefore form an infinite regress.
We call the task of constructing nodes with mutually consistent, self-contained peer reference values the \emph{reference-value bootstrapping problem} for mutual attestation.

This many-node setting is not hypothetical.
Decentralised and peer-to-peer computing platforms pool physical computing resources from independent providers, often with cryptocurrency incentives, and run confidential workloads across them.
No operator is trusted by all participants, so the workable roots of trust are the nodes' own attestation mechanisms and every node acts as both Attester and Relying Party.
The same need arises when a confidential application is composed of multiple cooperating enclave services, the setting targeted by the Decent platform \citep{DECENT}, and in peer-to-peer confidential-computing networks at large \citep{CarefulWhisper}.
A more practical example is an agentic AI system comprising multiple AI and tool components: components supplied by mutually distrustful providers can run in separate confidential VMs and perform mutual attestation before exchanging prompts, credentials, or intermediate state \citep{AgenTEE}.

Existing approaches address or avoid this problem in one of four ways (\prettyref{sec:related}): 
\begin{enumerate}
\item delegating the distribution of reference values to a trusted third party (TTP) \citep{AWSBlog,Teaclave};
\item authenticating externally configured authorisation lists via architecture-specific credentials \citep{DECENT}; 
\item placing configuration in architecture-specific report fields that lie outside the measured scope \citep{KSS}; or
\item exploiting an architecture-specific measurement process that lets a group of nodes derive one another's reference values \citep{MAGE}.
\end{enumerate}
Each either brings back a trusted party or ties the protocol to a particular isolation architecture.

The aim of this paper is to provide a solution to the bootstrapping problem for mutual attestation that falls into none of the above categories; in other words, to present the first solution that is architecture-independent and requires neither a TTP nor externally supplied reference values.
Our approach uses the target architecture's existing software stack as-is, requiring no modifications to its SDK, signing tools, loader, or attestation stack.
Our main contributions are fourfold.

\begin{enumerate}
\item We reformulate the reference-value bootstrapping problem as a system of mutual fixed-point equations, and solves it using Kleene's second recursion theorem \citep{Kleene_1938} (\prettyref{sec:theory}).
It also follows from the uniformity of Kleene’s theorem that it is possible to construct a transpiler that automatically generates mutual-referencing programmes.
\item For architectures that measure the deployed code itself (e.g. Linux IMA), each node reconstructs a peer's exact source from built-in data and hashes it. We realise this with the \pyreflect{} transpiler, which turns templates---families of programs written with references to one another's source---into working, mutually referencing nodes, and demonstrate the feasibility of self-contained reference values in a mutual-attestation PoC using TPM quote machinery backed by a userland software TPM (\prettyref{sec:nobuild}).
\item For architectures that measure build artifacts (e.g. SGX enclaves, confidential VMs, confidential containers, Nitro Enclaves), the construction carries over exactly through \emph{reproducible builds}: a node must rebuild its peer's artifact bit-for-bit before measuring, or the recomputed reference value will not match the deployed one.
We instantiate this case with the \nixreflect{} transpiler: its bundled proof of concept runs two AWS Nitro Enclaves that reproduce each other's enclave image bit-for-bit---and thereby each other's reference PCR values---from built-in data alone (\prettyref{sec:build}).
\item We isolate the run-time cost of rebuilding on a Nitro Enclaves host by comparing two \nixreflect{} variants: one reconstructs and simply hashes the peer's Nix source, while the other reproducibly rebuilds the peer's enclave image and derives its reference PCRs.
In our PoC, the source-to-image-to-PCR computation is approximately $100\times$ slower than hashing alone; because rebuilding traverses the workload's full build closure, this gap is expected to widen for more complex applications (\prettyref{sec:cost}).
\end{enumerate}

In \prettyref{sec:security}, we specify the trust assumptions underlying our approach to mutual attestation, and clarify and delimit the scope of our contributions.
The construction replaces only the external Reference Value Provider: each uncompromised verifier derives the expected value and verifies the peer within its own TCB, independently in each direction.
It otherwise inherits the trust roots, guarantees, and out-of-scope attacks of the underlying one-way-attestation flow.

In \prettyref{sec:discussion}, we examine two operational consequences.
First, attestable builds can remove the run-time toolchain and build latency without trusting the party that supplies the resulting proof, but a verifier must trust the Build TEE's hardware root and endorsement PKI; this is an additional root when the Build TEE and deployed nodes do not already share one.
Second, updates currently propagate across the whole mutually dependent family, while a dependency-graph decomposition could confine them to affected strongly connected components.

In \prettyref{sec:related}, we compare our construction with representative systems from the four existing approaches outlined above.
We also treat group-wide attestation topology as an orthogonal scalability question.

\section{The bootstrapping problem as mutual fixed points}
\label{sec:theory}

The apparent infinite regress becomes a system of mutual fixed-point equations once it is written down.
We model a group of $k$ nodes with three assumptions.
\begin{enumerate}
\item[(i)] \emph{Programs}. A node is identified with (the G\"odel number of) its source code $e$ and computes the partial function $\varphi_{e}$. Here $\varphi$ is a fixed acceptable numbering (\prettyref{def:numbering} below)---the denotational semantics of the implementation language.
\item[(ii)] \emph{Source access}. The intended behaviour of node $i$ is a computable partial function $f_{i}\left( \vec{e}, {-} \right)$. During execution it may use the source code $\vec{e} = \left( e_{1}, \ldots, e_{k} \right)$ of every member, itself included.
\item[(iii)] \emph{Computable measurement}. A node's measurement---hence its reference value---is a fixed (deterministic) computable function of its source code. Holding a peer's source therefore suffices to compute its reference value.
\end{enumerate}

Under these assumptions, a mutually attesting family is exactly a tuple $\vec{e}$ of source codes solving the system
\begin{equation}
\label{eq:fixpoint}
\varphi_{e_{i}} = f_{i}\left( e_{1}, \ldots, e_{k}, {-} \right), \qquad 1 \leq i \leq k.
\tag{$\dagger$}
\end{equation}
The unknowns $e_{i}$ occur on both sides.
The left hand side $\varphi_{e_{i}}$ is the partial function the program $e_{i}$ computes.
The right hand side $f_{i}\left( \vec{e}, {-} \right)$ represents the behaviour of $\varphi_{e_{i}}$, which computes the partial function $f_{i}$ using the source code $\vec{e}$ of $\varphi_{e_{j}}$ ($1\leq j\leq k$) as part of its input.

Hard-coding attempts to solve system~\eqref{eq:fixpoint} by repeated substitution, which never terminates\footnote{More precisely, condider the directed graph $G := \left(V\left(G\right), E\left(G\right)\right)$, where $V\left(G\right) := \set{i|1\leq i\leq k}$ and $E\left(G\right) := \set{\left(i, j\right) | f_{i} \text{ depends on variable } e_{j}}$.
If $G$ is acyclic, hard-coding works as expected.
However, if $G$ contains a directed cycle, hard-coding results in an infinite regress along that directed cycle.
See also \prettyref{sec:updates}.
}; but the system as a whole is simply a \emph{mutual fixed-point equation}, and recursion theory solves such equations in full generality.
We refer to \citep{Odi89} for terminology and notation in recursion theory.

\begin{definition}
\label{def:numbering}
Let $\mathcal{P}^{\left(n\right)}$ be the class of all computable partial functions of type $\mathbb{N}^{n} \rightharpoonup \mathbb{N}$.
A surjection $\varphi \colon \mathbb{N} \to \mathcal{P}^{\left(n\right)}$ is called a \emph{computable numbering} of $\mathcal{P}^{\left(n\right)}$ if the $\left(n+1\right)$--ary partial function $\left(e, x\right) \mapsto \varphi_{e}\left(x\right) := \varphi\left(e\right)\left(x\right)$ is computable.
A computable numbering $\varphi$ of $\mathcal{P}^{\left(n\right)}$ is said to be \emph{acceptable} if for each computable numbering $\psi$ of $\mathcal{P}^{\left(n\right)}$, there exists a computable function $r\colon \mathbb{N} \to \mathbb{N}$ (called a \emph{many-one reduction}) such that $\psi = \varphi \circ r$.
\end{definition}

Intuitively, a computable numbering is an interpreter (or evaluation function) for a particular programming language; $\varphi_{e}$ is the partial function computed by code $e$; the many-one reduction $r$ is a transpiler from language $\psi$ to language $\varphi$.
An acceptable numbering can be thought of as a universal programming language into which any other language can be effectively translated.
Unless one goes out of one's way to build something special---Friedberg's enumeration without duplication being the classical example \citep{Friedberg_1958}---any reasonable Turing-complete programming language naturally induces an acceptable numbering.

The single-node case ($k=1$) of system~\eqref{eq:fixpoint} is precisely the subject of Kleene's second recursion theorem.

\begin{theorem}[Kleene~\citep{Kleene_1938}]
\label{thm:kleene}
For each $n\in\mathbb{N}$, let $\varphi^{\left( n \right)}$ be an acceptable numbering of $\mathcal{P}^{\left(n\right)}$.
For every computable partial function $f \colon \mathbb{N}\times\mathbb{N}^{n} \rightharpoonup \mathbb{N}$, there exists an $e \in \mathbb{N}$ such that
\[
\varphi^{\left(n\right)}_{e} = f\left( e, {-} \right).
\]
Moreover, the index $e$ can be chosen effectively: there exists a computable function $e\colon \mathbb{N} \to \mathbb{N}$ such that
\[
\varphi^{\left(n\right)}_{e\left(i\right)} = \varphi^{\left(n+1\right)}_{i}\left( e\left(i\right), {-} \right).
\]
\end{theorem}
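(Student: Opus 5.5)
The plan is the classical diagonal construction. It uses only acceptability (Definition~\ref{def:numbering}) to obtain two tools: an $s$-$m$-$n$ style parametrisation and a computable composition operator. The first step is to derive, from acceptability, a computable total function $s$ such that $\varphi^{(n)}_{s(a)} = \varphi^{(n+1)}_{a}(a,a,{-})$ (more generally, $s$-$m$-$n$ for the numberings $\varphi^{(n)}$). Concretely, consider the numbering $\psi$ of $\mathcal{P}^{(n)}$ defined by $\psi_{\langle a,b\rangle}(x) := \varphi^{(n+2)}_{a}(b,x)$, where $\langle\cdot,\cdot\rangle$ is a computable pairing. This $\psi$ is computable because $\varphi^{(n+2)}$ is. It is surjective because any $g\in\mathcal{P}^{(n)}$ equals $\varphi^{(n+2)}_a(b,{-})$ for an $a$ indexing $(y,x)\mapsto g(x)$. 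Acceptability of $\varphi^{(n)}$ then yields a computable $r$ with $\psi=\varphi^{(n)}\circ r$, so $S(a,b):=r(\langle a,b\rangle)$ is a computable total parametrisation function. I will use this device repeatedly, each time with a suitably chosen auxiliary computable numbering.

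Next comes the diagonal step for the uniform version. Given $i$, I want an index $d(i)$ and a map $h$ with $\varphi^{(n)}_{h(y)} = \varphi^{(n+1)}_i\bigl(\varphi^{(1)}_y(y), {-}\bigr)$ whenever $\varphi^{(1)}_y(y)$ is defined (empty function otherwise). To construct $h$, I define a computable numbering $\chi$ of $\mathcal{P}^{(n)}$ by $\chi_{\langle i,y\rangle}(x) := \varphi^{(n+1)}_i(\varphi^{(1)}_y(y), x)$. Surjectivity is checked by choosing $y$ with $\varphi^{(1)}_y$ total constant, so that every $\varphi^{(n+1)}_i(c,{-})$ appears, and every element of $\mathcal{P}^{(n)}$ has this form. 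Acceptability gives a computable $t$ with $\chi=\varphi^{(n)}\circ t$, and I set $h_i(y):=t(\langle i,y\rangle)$. The map $y\mapsto h_i(y)$ is total computable, uniformly in $i$. Since $\varphi^{(1)}$ is a surjective numbering of $\mathcal{P}^{(1)}$, for the computable function $(i,y)\mapsto t(\langle i,y\rangle)$ I need an index $y=d(i)$, computable in $i$, with $\varphi^{(1)}_{d(i)}(z)=t(\langle i,z\rangle)$. This again comes from an acceptability reduction, applied to the numbering $\langle i,z\rangle\mapsto$ (the unary function $w\mapsto t(\langle i,w\rangle)$) padded to be surjective, by interleaving it with $\varphi^{(1)}$ itself, for instance using even/odd codes. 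Finally I put $e(i):=h_i(d(i)) = t(\langle i,d(i)\rangle)$. Then $\varphi^{(1)}_{d(i)}(d(i)) = t(\langle i,d(i)\rangle)=e(i)$ is defined, so $\varphi^{(n)}_{e(i)} = \chi_{\langle i,d(i)\rangle} = \varphi^{(n+1)}_i(e(i),{-})$, as required. The non-uniform statement follows by taking $i$ to be a $\varphi^{(n+1)}$-index of $f$, which exists by surjectivity.

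The main obstacle is bookkeeping rather than any new idea. Definition~\ref{def:numbering} provides only reductions \emph{into} $\varphi$ from numberings that are already surjective, so each auxiliary numbering I introduce must be checked to be both computable and surjective, and where it is not, it must be padded by interleaving with $\varphi$ itself. I would verify surjectivity carefully for $\chi$ and for the numbering used to obtain $d$. I would also track that every function produced ($S$, $t$, $d$, $e$) is total computable, since the uniformity claim depends on this.
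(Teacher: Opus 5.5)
Your proof is correct, but it takes a different route from the paper's.

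\textbf{The paper's route.} It takes the functions $S^{m}_{n}$ as given and diagonalises \emph{syntactically}. It sets $g(\vec{x},y,z) := \varphi^{(1+n)}_{z}(S^{1}_{n}(y,y),\vec{x})$, takes an index $j$ of $g$, and puts $e(i) := S^{1}_{n}(a,a)$ with $a := S^{1}_{n+1}(j,i)$. Running the fixed point, the program recovers its own index as $S^{1}_{n}(a,a)$, a purely syntactic operation.

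\textbf{Your route.} You diagonalise \emph{semantically}, through the universal function, as in the familiar fixed-point argument via $\varphi_{y}(y)$. You use $\chi_{\langle i,y\rangle} = \varphi^{(n+1)}_{i}(\varphi^{(1)}_{y}(y),{-})$, choose $d(i)$ as an index of $z\mapsto t(\langle i,z\rangle)$, and put $e(i) := t(\langle i,d(i)\rangle)$. The key step $\varphi^{(1)}_{d(i)}(d(i)) = e(i)$ holds, and because $t$ is total this value converges. So the empty members of $\chi$ matter only for surjectivity, which you handle correctly.

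\textbf{What yours buys.} It is self-contained relative to the paper's per-arity definition of acceptability. Every parametrisation comes directly from a reduction of an auxiliary numbering that you check is computable and surjective, and the even/odd interleaving for $d$ is exactly the padding needed. The paper instead appeals to ``the $S^{m}_{n}$ theorem''. For an arbitrary acceptable numbering in its sense, the justification of that theorem is precisely the argument of your first step, with $\psi_{\langle a,\vec{b}\rangle} := \varphi^{(n+m)}_{a}({-},\vec{b})$.

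\textbf{What the paper's buys.} It yields the correspondence the paper needs. The self-application $S^{1}_{n}(j,j)$ is the quine-like move that the transpilers realise by regenerating the \texttt{DATA} line from the blob. In your construction, a running node would obtain its own index only by executing the program $d(i)$ on input $d(i)$, that is, by invoking an interpreter.

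Two minor points. First, your first step is never used later, since you apply the acceptability device directly to $\chi$ and to the padded numbering. Second, as written that step has arity slips: $\varphi^{(n+2)}_{a}(b,x)$ should be $\varphi^{(n+1)}_{a}(b,x)$, and $\varphi^{(n+1)}_{a}(a,a,{-})$ does not type-check. Both are harmless.
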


Mutual attestation among several nodes requires the simultaneous form of the theorem, which follows from the single-node case.

\begin{corollary}[Simultaneous recursion theorem]
\label{cor:simultaneous}
Let $\varphi^{\left(n\right)}$ be an acceptable numbering of $\mathcal{P}^{\left(n\right)}$.
For every $k$-tuple of computable partial functions $f_{1}, \dots, f_{k} \colon \mathbb{N}^{k} \times \mathbb{N}^{n} \rightharpoonup \mathbb{N}$ there exists a $k$--tuple of indices $\vec{e} := \left( e_{1}, \ldots, e_{k} \right) \in \mathbb{N}^{k}$ such that
\[
\varphi^{\left(n\right)}_{e_{i}} = f_{i}\left( \vec{e}, {-} \right), \qquad 1 \leq i \leq k.
\]
Moreover $\left( e_{1}, \ldots, e_{k} \right)$ is computable uniformly from indices of the $f_{i}$s.
\end{corollary}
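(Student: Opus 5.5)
We derive the corollary from \prettyref{thm:kleene} by a Smullyan-style nesting that uses the parametric form of the theorem. The idea is to solve for the last index as a function of the first $k-1$, substitute, and induct on $k$. We also need the s-m-n property of an acceptable numbering, which follows from \prettyref{def:numbering}. For any computable partial $g(\vec{y}, x)$ there is a total computable $s$ with $\varphi^{(n)}_{s(\vec{y})} = g(\vec{y}, {-})$. To see this, define a computable numbering $\psi$ that codes pairs $\langle \text{index}, \vec{y}\rangle$ with a Cantor pairing, pads the rest with the numbering itself, and then pull back along the many-one reduction $r$.

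The case $k=1$ is \prettyref{thm:kleene}. For the inductive step, take $f_{1},\dots,f_{k}$. First I fix $e_{1},\dots,e_{k-1}$ as parameters and apply the effective form of \prettyref{thm:kleene} to
\[
(e_{k}, x) \mapsto f_{k}(e_{1},\dots,e_{k-1},e_{k},x).
\]
Combining this with s-m-n yields a total computable $h(e_{1},\dots,e_{k-1})$. Concretely, let $i(\vec{y})$ be an index with $\varphi^{(n+1)}_{i(\vec{y})}(z,x) = f_{k}(\vec{y},z,x)$, and set $h := e\circ i$. Then
\[
\varphi^{(n)}_{h(\vec{y})} = f_{k}(\vec{y}, h(\vec{y}), {-}).
\]

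Next I define, for $j<k$,
\[
g_{j}(y_{1},\dots,y_{k-1},x) := f_{j}(y_{1},\dots,y_{k-1},h(\vec{y}),x).
\]
Each $g_{j}$ is computable because $h$ is total computable. The induction hypothesis with $k-1$ functions gives $e_{1},\dots,e_{k-1}$ satisfying $\varphi^{(n)}_{e_{j}} = g_{j}(\vec{e}, {-})$. Setting $e_{k} := h(e_{1},\dots,e_{k-1})$ then solves the whole system: the equations for $j<k$ hold by the definition of $g_{j}$, and the equation for $k$ holds by the defining property of $h$.

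For uniformity, each step is effective in indices of the $f_{i}$. The index $i(\vec{y})$ is obtained from an index of $f_{k}$ by s-m-n, and indices of the $g_{j}$ are obtained from indices of $f_{j}$ and of $h$ by s-m-n and composition. The induction hypothesis, read uniformly, then returns $e_{1},\dots,e_{k-1}$ computably, and $e_{k}$ follows by applying $h$. I expect the main obstacle to be careful bookkeeping rather than any real difficulty. We have to justify s-m-n and parametric composition for an acceptable numbering across the different arities $\varphi^{(n)}$ and $\varphi^{(n+1)}$, and check that $h$ depends computably on an index of $f_{k}$, so that the induction hypothesis can be applied uniformly.
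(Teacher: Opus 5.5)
Your proof is correct, but it takes a different route from the paper's.

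\textbf{The paper's route.} It makes a single call to \prettyref{thm:kleene} at arity $n+1$, applied to the selector function $F(y,\vec{x},i) := f_{i}(S^{1}_{n}(y,1),\ldots,S^{1}_{n}(y,k),\vec{x})$. It then reads off all $k$ solutions as projections $e_{i} := S^{1}_{n}(e,i)$ of one self-referential index $e$. (It also mentions the shortcut via $\mathbb{N}^{k}\cong\mathbb{N}$.)

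\textbf{Your route.} You induct on $k$ using the parametric recursion theorem, in the style of Smullyan's double recursion theorem. You eliminate the last unknown as a total computable $h(e_{1},\ldots,e_{k-1})$ and then solve the reduced system. The steps check out:
\begin{itemize}
\item $h = e\circ i$ satisfies $\varphi^{(n)}_{h(\vec{y})} = f_{k}(\vec{y},h(\vec{y}),{-})$.
\item The $g_{j}$ are computable because $h$ is total.
\item Indices of $h$ and of the $g_{j}$ are computable from indices of the $f$'s by s-m-n, so uniformity propagates through the induction.
\end{itemize}
Your derivation of s-m-n from acceptability also works. The padding is not even needed, since $\langle a,\vec{y}\rangle\mapsto\varphi^{(n+m)}_{a}({-},\vec{y})$ is already onto $\mathcal{P}^{(n)}$.

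\textbf{What each buys.} Yours invokes the theorem only at arity $n$, with no extra selector argument, and it yields the parametric form as a by-product. On the other hand, it needs the uniform clause even for bare existence, because $h$ must be computable for the $g_{j}$ to be computable. It also uses $k-1$ nested applications, treats the nodes asymmetrically, and produces nested indices.

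The paper's route needs one application and is symmetric in the nodes. Its solution $(S^{1}_{n}(e,1),\ldots,S^{1}_{n}(e,k))$ is exactly a shared index plus a per-node selector. That is the shared-blob-plus-selector layout that \pyreflect{} and \nixreflect{} implement (\prettyref{app:design}). A transpiler modelled on your nesting would instead embed a generator for node $k$ inside the other nodes, and so on recursively.
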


The two clauses of \prettyref{thm:kleene} do different work.
The first is an \emph{existence} statement: for every behaviour $f$ there is a program $e$ that may use its own source code, so self-reference never stands in the way of implementation.
The second, \emph{uniform} clause strengthens existence into an \emph{algorithm}---in programming terms, a transpiler that takes any program expecting its own source code as an extra input and turns it into a genuinely self-referential program with the same behaviour.

The same reading applies to \prettyref{cor:simultaneous}: an entire mutually referencing family can be generated automatically from a template of behaviours written with the solution codes as free variables.
This is the clause whose constructive idea our tools mirror---\pyreflect{} and \nixreflect{} (\prettyref{sec:nobuild} and \prettyref{sec:build}) are operational counterparts of the map $\left(f_{1}, \ldots, f_{k}\right) \mapsto \left(e_{1}, \ldots, e_{k}\right)$.
Both results are classical; we write out their proofs in \prettyref{app:proofs} because the transpilers follow the same self-application pattern and are best read alongside them.

Read in the model above, \prettyref{cor:simultaneous} states that system~\eqref{eq:fixpoint} always has a solution, computable uniformly from the behaviours: under assumptions (i)--(iii), self-contained reference values can always be constructed without a trusted third party or a host-injected measurement.
Each node $e_{i}$ can compute, from built-in data alone, the exact code---and hence, by assumption~(iii), the reference value---of every node in the family, itself included.

Assumptions (i) and (ii) are harmless; assumption (iii) is substantive because its realisation depends on what the isolation architecture measures.
\prettyref{sec:nobuild} treats architectures that measure the deployed code itself, where no build intervenes and the source-to-measurement map is a plain hash; \prettyref{sec:build} treats architectures that measure build artifacts, where we realise assumption~(iii) through reproducible builds.

\section{Mutual attestation without builds}
\label{sec:nobuild}

We first instantiate the construction when the measured object \emph{is} the deployed code: the architecture hashes the raw bytes of the program it loads.
Linux IMA \citep{IMA} is the canonical example---it measures each file as it is loaded and accumulates the digests in TPM PCR~10---and interpreted deployments generally fall into this class, since the file that ships is the file that runs.
Here the source-to-measurement map of assumption~(iii) is a single hash application, and the general construction goes as follows:
\begin{enumerate}
\item[(1)] At build time, transpile the family using \prettyref{cor:simultaneous} so that each node can reconstruct every peer's exact source code from built-in data.
\item[(2)] At run time, reconstruct the peer's source and hash it, obtaining the peer's reference value.
\item[(3)] Run the architecture's standard attestation flow, checking the measurement reported in the peer's evidence against the self-computed reference value.
\end{enumerate}
Step~(1) is a one-time transpilation; steps (2) and (3) need nothing beyond a hash function and the ordinary attestation machinery.
The rest of this section describes our implementation of step~(1) and a proof of concept that places the resulting self-computed reference values in a complete TPM mutual-attestation flow.

\subsection{The \texttt{PyReflect} transpiler}
\label{sec:transpiler}

\Pyreflect{} implements the uniform clause of \prettyref{cor:simultaneous} for the Python language.
Its input is a \emph{template}, represented as a JSON list of entries.
Each entry pairs a node-id $N$ from a finite set $\mathcal{N} = \left\{ N_{1}, \ldots, N_{n} \right\}$ with a body $P_{N}$ of Python source code; an occurrence of $N' \in \mathcal{N}$ inside a body stands for the source code of node $N'$.
A template is thus the tuple $f_{1}, \ldots, f_{n}$ of \prettyref{sec:theory}: each intended behaviour written with the solution codes as free variables.
Its output is one Python file per node---the solution tuple $\vec{e}$: each emitted node reconstructs, at run time and from built-in data alone, the byte-exact source code of every node in the family.

The transpiler follows the idea of the proof of \prettyref{thm:kleene}.
Two moves do the work.
First, the family is treated as a \emph{single} system: the whole template is serialised into one canonical data blob, shared verbatim by all nodes, and each emitted file is that blob plus a one-line selector naming which member it is.
Mutual reference among $n$ programs thereby reduces to self-reference of one system---the same reduction that derives \prettyref{cor:simultaneous} from \prettyref{thm:kleene}.

Second, the remaining self-reference is realised by Kleene's trick: no file can contain its own text outright, so occurrences of node-ids in the bodies are rewritten into calls to a run-time reconstruction function, which reassembles any member's source file from the blob and regenerates the single line the blob cannot contain---its own embedding---just as the classical quine programme does.
\prettyref{app:design} details the emitted file layout, the reconstruction function, and the correspondence with the proof (\prettyref{app:proofs}).

\subsection{PoC: TPM-backed mutual attestation}
\label{sec:tpm}

As a proof of concept we transpile a two-node template (node-ids \texttt{\_\_NODE1}, \texttt{\_\_NODE2}) into a pair that attest each other using TPM quote machinery, with each node backed by its own software TPM emulator (\texttt{swtpm}) driven through \texttt{tpm2-pytss}.
The entire PoC runs inside a Docker container against the software TPM, so the demonstration is environment-independent and reproduces identically on any host.
Reference values are self-computed: a node reconstructs its peer's exact source with the reconstruction function and takes its SHA-256 digest; since the reconstruction is byte-for-byte, this digest equals the measurement of the peer's deployed file.

The nodes then perform Elliptic-Curve Diffie--Hellman (ECDH) key exchange and mutual attestation, keeping measurement and session binding separate.
Each node sends a fresh nonce and an ephemeral ECDH public key; resets a PCR and extends into it its own on-disk digest; and returns a TPM quote over that PCR, signed by an attestation key (AK), whose qualifying data is a digest of the node's ephemeral public key and the peer's nonce.

The PCR thus carries only the measurement, while the qualifying data binds the quote to the session: the peer's nonce gives freshness, and the ephemeral key ties the attested identity to the key exchange.
Each node verifies the peer's quote by replaying the expected PCR from the self-computed reference value alone, then checking the signature, the quote structure, the PCR digest, and the qualifying data, recomputed from the peer's ephemeral key and the node's own nonce.

The PoC makes two deliberate simplifications.
First, its \texttt{swtpm} instances run in ordinary userland and therefore provide no hardware-backed TPM identity.
Credential activation would show only that an AK belongs to the same emulator instance as its emulator-generated endorsement key (EK); it would not establish a hardware root of trust.
The PoC therefore omits AK--EK binding and trusts each received AK.
This choice is specific to the emulator, not to the reference-value construction.
The same TPM interface can instead be backed by a TEE-isolated vTPM \citep{vTPM} or a physical TPM.
Such a deployment would authenticate the AK through its established endorsement path---the confidential VM's hardware attestation for a vTPM, or an EK certificate and credential activation for a physical TPM.

Second, the PoC uses PCR~23, which userland can extend and reset at will, so nothing ties the extended code hash to the program actually running.
A rigorous deployment should instead measure files with the Linux Integrity Measurement Architecture (IMA) \citep{IMA}, replaying its measurement log against PCR~10; the session binding carries over unchanged, since it lives in the qualifying data rather than in any PCR.
These simplifications concern how a production TPM deployment authenticates the TPM and its measurement, not the feasibility question studied here: the PoC instantiates the missing reference-value step, generating mutually dependent values from the nodes themselves and using them in an otherwise standard attestation flow.

\section{Mutual attestation with reproducible builds}
\label{sec:build}

\subsection{From source to measurement}
\label{sec:source2meas}

This section instantiates the same construction when the measured object is not deployed code but a build artifact.
Most TEE measurements are of this kind.
Intel SGX's \texttt{MRENCLAVE} digests the enclave's initial pages in load order; a confidential VM's launch measurement (\texttt{MRTD} on TDX, the launch digest on SEV-SNP) covers the initial guest memory image; confidential containers bind a digest of the container image---or of a policy naming it---into the attestation report.
Similarly, AWS Nitro Enclaves measure the enclave image file (EIF) into PCR0--2.

In all of these, the measured object is a \emph{build artifact}, not source text: the reference value is of the form $\mu\left(\beta\left(e\right)\right)$, where $\mu$ is the artifact measurement function fixed by the architecture, and $\beta$ is the build procedure that generates the artifact from source code $e$.
It is particularly important to note that the build procedure $\beta$ is not, in general, a mathematical, deterministic function.

The fixed-point construction itself is independent of this choice: changing architecture changes $\mu$ and $\beta$, not the mutual source-reproduction step.
In this sense, our method remains architecture-independent.

The construction of \prettyref{sec:nobuild} therefore picks up one extra step.
\begin{enumerate}
\item[(1)] At build time, transpile the family over \emph{sources} exactly as before.
\item[(2)] At run time, reconstruct the peer's source from built-in data.
\item[(2')] Rebuild the peer's executable artifact so that it is bit-for-bit identical to the deployed artifact over all measurement-relevant contents.
\item[(3)] Apply $\mu$ to the rebuilt artifact, obtaining the peer's reference value, and run the standard attestation flow against it.
\end{enumerate}

The requirement in step~(2') is essential: artifacts built from the same source code $e$ must always have the same measurement $\mu\left(\beta\left(e\right)\right)$, as required by assumption~(iii).
As noted above, however, $\beta$ need not be deterministic, so an ordinary build procedure may violate this requirement.

We meet this requirement with \emph{reproducible builds} \citep{Reproducible}.
By pinning all build dependencies by hash and using a reproducible builder, we ensure that the same complete source description reproduces the same artifact over all measurement-relevant contents, and hence the same measurement.

\begin{table*}
\caption{Architecture-specific measurements and derivation of the corresponding expected values from peer source (assumption~(iii)).}
\label{tab:source2meas}
\small
\begin{tabularx}{\linewidth}{
  @{}
  >{\raggedright\arraybackslash}X
  >{\raggedright\arraybackslash}X
  >{\raggedright\arraybackslash}X
  >{\raggedright\arraybackslash}X
  @{}
}
\toprule
Architecture & Measurement input & Attested evidence & Source $\to$ expected value \\
\midrule
Linux IMA (TPM) & policy-selected file contents and metadata & quoted PCR~10 and IMA measurement log & hash the target file; verify its log entry after replaying the log to PCR~10 \\
TPM PoC (\prettyref{sec:tpm}) & deployed source-file digest extended into PCR~23 & quote over PCR~23 & hash the source and replay the PCR extend \\
Intel SGX & initial enclave pages and load metadata & \texttt{MRENCLAVE} & reproducible enclave build; apply SGX measurement \\
CVM (Intel TDX / AMD SEV-SNP) & initial TD/guest state, including page contents, placement, and launch metadata & \texttt{MRTD} / launch digest & reproducible VM build and deterministic launch-measurement computation \\
Confidential containers (runtime-specific) & guest state and an attested workload policy, when supported & TEE measurement and policy/init-data digest & reproduce the image and policy; apply the runtime's appraisal procedure \\
AWS Nitro Enclaves & enclave image file (EIF) and boot components & PCR0--2 & reproducible EIF build; derive PCR0--2 (\prettyref{sec:nitro}) \\
\bottomrule
\end{tabularx}
\end{table*}

\prettyref{tab:source2meas} summarises both source-to-measurement cases.
For an artifact-measured architecture, our source-level construction computes $\mu \circ \beta$ by rebuilding the artifact reproducibly and then applying the architecture's $\mu$.
The fixed-point construction itself is unchanged for every computable $\mu$, but an implementation must carry the corresponding toolchain and pay the build cost.
The Nitro Enclaves proof of concept below instantiates this source-to-reference map.

\subsection{\texttt{NixReflect} and PoC: reproducing Nitro Enclave PCRs}
\label{sec:nitro}

The rewriting mechanism of \prettyref{sec:transpiler} is tied neither to Python nor to hashing: it applies to any language whose programs are to be reproduced verbatim from built-in data.
\Nixreflect{}---itself written in Python---takes a family of mutually referencing \emph{Nix expressions} and turns them into nodes, each able to reconstruct the exact build definition of every member by Kleene's trick.
Nix \citep{nix} is the natural target for this setting: the pinned Nix expressions used by our PoC close over the toolchain, dependencies, and sources, so reconstructing a peer's expression reconstructs everything step~(2$'$) needs.

As a proof of concept, bundled with \nixreflect{}, we implement a variant of the mutual quine on AWS Nitro Enclaves: a two-node template transpiled into a pair of enclaves that, at run time, reproduce each other's reference PCR0--2.
On Nitro Enclaves a node's measurement is the set of PCR values of its enclave image file (EIF).
The stock toolchain builds an EIF with \texttt{nitro-cli} from a Docker image, and this pipeline is not reproducible; we instead build each node's EIF with Nix via \texttt{monzo/aws-nitro-util} \citep{nitro-util}, which yields bit-identical EIFs---and hence identical PCRs---from pinned inputs.
The transpiled enclave definitions are reproducibly built and deployed side by side on the parent instance.

We have verified the reference-value reproduction end-to-end on both \texttt{x86\_64} and \texttt{AArch64} Nitro Enclaves-capable parent instances.
At run time each enclave, from built-in data alone, (i)~reconstructs its peer's enclave definition (a Nix expression), (ii)~reproducibly rebuilds the peer's EIF \emph{inside the enclave}, and (iii)~computes PCR0--2 from the rebuilt image; reproducibility guarantees these equal the measurements of the peer's deployed image.
The nodes thus obtain one another's reference PCRs from built-in data alone, resolving, without a TTP, the reference-value bootstrapping problem that AWS addresses with a measurement notary (\prettyref{sec:notary}).

The PoC deliberately stops at the feasibility question posed by this paper: the two enclaves derive and announce each other's reference values rather than reimplement the established Nitro Enclaves attestation.
Computing a peer's reference PCRs from built-in data is exactly the self-contained Reference Value Provider of \prettyref{sec:theory}---the ingredient made problematic by mutual dependence---whereas obtaining an attestation document from the Nitro hypervisor and verifying it against those PCRs is the standard attestation flow for Nitro Enclaves.
Composing the reproduced values with that flow yields mutual attestation without an external Reference Value Provider.

\section{Cost}
\label{sec:cost}

\emph{Transpilation.}
Let $\left|P_{i}\right|$ be the size of the template body of node $i$ in a $k$--node family.
The emitted node $e_{i}$ consists of the shared header and framework, a one-line selector, the rewritten body $\hat{P}_{i}$, and the data blob, which encodes the header, the framework, and \emph{all} rewritten bodies $\hat{P}_{1}, \ldots, \hat{P}_{k}$ (\prettyref{app:design}).
Rewriting alters each node-id occurrence by a constant, and the blob encoding inflates by a constant factor, so, taking the header and framework as constants of the transpiler,
\[
\left|e_{i}\right| = \Theta\Bigl(\textstyle\sum_{j} \left|P_{j}\right|\Bigr),
\qquad
\textstyle\sum_{i} \left|e_{i}\right| = \Theta\Bigl(k \textstyle\sum_{j} \left|P_{j}\right|\Bigr),
\]
and transpilation time is linear in the same quantity.
In the Nitro Enclaves PoC, a two-node template of $0.9$\,KB yields nodes of $3.4$\,KB each.

\emph{Reference-value derivation.}
At run time a node derives a peer's reference value in two steps: reconstruct the peer's source---string reassembly from the blob, linear in $\sum_{j} \left|P_{j}\right|$---and map the source to a measurement.
Without a build (\prettyref{sec:nobuild}), the second step computes the measurement directly from the source by a single hash, so the whole derivation is negligible.
With a build (\prettyref{sec:build}), the node must instead carry out a reproducible build of the peer's source inside itself, and this step can be expected to dominate; unlike the sizes above, its absolute cost is not determined by the construction.

The attested handshake lies outside this derivation cost and is unchanged in either case.
Mutual attestation costs exactly two one-way attestations per pair, up to $k\left(k-1\right)/2$ pairs group-wide (\prettyref{sec:scalability}).

We measure the derivation cost on \nixreflect{}'s two bundled variants, run on the same host.
The \emph{rebuild} variant is the PoC of \prettyref{sec:nitro}.
The \emph{digest} variant is identical in layout but stops after the quine step, reconstructing the peer's source and computing its SHA-384 digest with no in-enclave build, isolating the cost of step~(2$'$).

\prettyref{tab:bench-env} summarises the configuration relevant to interpreting the comparison; exact image and tool revisions are reported in \prettyref{app:bench-env}.
Every benchmark run launches one enclave at a time with $2$ vCPUs and $2$\,GiB of memory.
The two variants therefore run under identical resource allocations; these are configured capacities, not measurements of resource usage.
The digest and rebuild EIFs are $42$\,MB and $50$\,MB, respectively.
Timing is taken from the arrival times of the enclave's console lines at the parent, between the entrypoint's opening banner and its result marker; every run boots a fresh enclave in debug mode solely to expose this console output, and is terminated afterwards.
All $10$ runs per node per variant printed values matching the expected ones---the digest of the peer's node file and the peer's build-time PCR0--2, respectively.

\begin{table}
\caption{Benchmark host and enclave configuration.}
\label{tab:bench-env}
\small
\begin{tabularx}{\columnwidth}{@{} l >{\raggedright\arraybackslash}X @{}}
\toprule
Item & Configuration \\
\midrule
Host & \texttt{m6g.xlarge}; AWS Graviton2 (Arm Neoverse-N1), $4$ vCPUs, $16$\,GiB RAM, one thread per core; Nitro Enclaves enabled \\
OS & Ubuntu 26.04 LTS; Linux 7.0.0-1010-aws \\
Per-enclave allocation & $2$ vCPUs and $2048$\,MiB; one enclave at a time, identical for both variants \\
Principal tools & \texttt{nitro-cli} 1.4.5; Determinate Nix 3.21.8 (Nix 2.34.8), flakes enabled \\
\bottomrule
\end{tabularx}
\end{table}

\begin{table}
\caption{Deriving a peer's reference value inside a Nitro Enclave with \nixreflect{}:
\texttt{mutual\_quine\_ne\_sha} (source digest) vs.\ \texttt{mutual\_quine\_ne\_pcrs} (reproducible rebuild).
Mean $\pm$ sample standard deviation over 10 runs for each node; observations from the two nodes are not pooled.}
\label{tab:cost}
\small
\begin{tabular}{@{} l c r r @{}}
\toprule
Metric & Node & digest & rebuild \\
\midrule
derivation (s) & 1 & $0.069 \pm 0.007$ & $6.853 \pm 0.024$ \\
 & 2 & $0.064 \pm 0.007$ & $6.839 \pm 0.017$ \\
launch $\to$ entrypoint (s) & 1 & $1.486 \pm 0.036$ & $1.561 \pm 0.042$ \\
 & 2 & $1.364 \pm 0.033$ & $1.684 \pm 0.038$ \\
launch $\to$ result (s) & 1 & $1.554 \pm 0.036$ & $8.415 \pm 0.042$ \\
 & 2 & $1.429 \pm 0.032$ & $8.524 \pm 0.048$ \\
\bottomrule
\end{tabular}
\end{table}

\prettyref{tab:cost} confirms that deriving a reference value by rebuilding takes approximately two orders of magnitude longer than hashing.
For this PoC, the absolute cost is modest: under seven seconds to re-stage the peer's rootfs, re-pack its ramdisk, and re-run the EIF assembly in RAM, and under ten seconds from cold launch to the peer's reference PCRs.
This result reflects a simple enclave whose application-level task is only to rebuild its peer and derive the peer's reference PCRs.

Both the rebuild time and the image size grow with the node's closure.
A more complex enclave that also includes a server application performing an attested handshake would therefore be expected to take longer to rebuild.
Although both variants in this PoC fit within the same $2$\,GiB allocation, a larger build may also require more enclave memory than hashing alone to hold its toolchain, inputs, and intermediate artifacts.
Both effects make the offloading of runtime-build (\prettyref{sec:offload}) more attractive for complex applications.

Note that the digest variant's sub-$0.1$\,s figure is near the resolution of console-based timing and is best read as ``negligible''.

\section{Trust assumptions and security considerations}
\label{sec:security}

Our construction changes a single role in the RATS architecture: it relocates the Reference Value Provider into the verifying node itself, and touches nothing else.
Because the provider is part of the verifier, the construction introduces no external party into the trust model.

The threat model is therefore \emph{inherited} from the underlying architecture: the adversary may control the network and any subset of hosts and nodes, and what must be trusted are exactly the roots of trust that one-way attestation of each node already assumes---the isolation architecture, its measurement mechanism, and the Endorser's certificates.
The guarantee therefore belongs to each verifier rather than to the group as a whole.
An uncompromised node derives the expected reference value and verifies each peer's evidence within its own TCB; compromising other group members cannot make it accept code that the underlying one-way attestation would reject.
For a pair of nodes $A$ and $B$, $A$'s verification of $B$ is independent of $B$'s verification of $A$.

Reference-value comparison uses the underlying architecture's measurement hash.
A node accepts a peer only if the reported measurement equals the digest of the reconstructed source (\prettyref{sec:nobuild}) or of the reproducibly rebuilt artifact (\prettyref{sec:build}), so passing off different code requires a second preimage.

Our contribution is limited to demonstrating the feasibility of self-contained reconstruction of mutually dependent reference values.
As stated above, it inherits the threat model and security semantics of the underlying one-way-attestation flow: it neither prevents attacks outside that model nor, at the protocol level, enables attacks that the model rules out.

\section{Discussion}
\label{sec:discussion}

\subsection{Offloading the run-time build}
\label{sec:offload}

For build-artifact measurements (\prettyref{sec:build}), each node computes reference values dynamically by carrying out a reproducible build at run time.
This is the dominant overhead of the construction: the node must carry a full build toolchain and pay the peer's build cost---in the Nitro Enclaves PoC the enclave image must ship Nix and all pinned inputs.

An alternative is to take care of the source-to-measurement step before deployment.
\emph{Attestable builds} \citep{AttestableBuilds} run a build inside a TEE-backed sandbox and emit a hardware-rooted proof---a \emph{confidential-computing proof} (CCP) \citep{CCP}---that a specific artifact, equivalently its measurement, was produced from a specific source snapshot.

In our setting, the nodes' vendor or deployer would issue such a CCP for every family member ahead of time; the recursion-theoretic fixed point is still taken over \emph{sources} exactly as in \prettyref{sec:theory}.
At run time a node is injected with the peer's (measurement, CCP) pair over any untrusted channel---the peer itself, a registry---reconstructs the peer's source from built-in data, checks that the CCP binds \emph{that} source to the measurement, and only then adopts the measurement as the reference value.
The node then needs no toolchain and incurs no build latency.

The scheme remains TTP-free in the relevant sense: a forged or mismatched pair is simply rejected, so the CCP provider need not be trusted---ground truth remains the self-contained source, from which the injected measurement must be proven to derive.
Offloading does, however, require each node to trust the build TEE's hardware root and endorsement PKI.
If the build and deployed nodes use the same attestation architecture and trust roots, this requirement adds no new root of trust; otherwise the verifier's trust set expands.
For example, using an AMD SEV-SNP build TEE for Nitro Enclaves nodes additionally requires trust in AMD's attestation chain.
Structurally this variant is closer to DECENT's provisioned credentials (\prettyref{sec:related}) than to the fully self-contained construction of \prettyref{sec:nitro}: what is added is an attestable-build proof chain, not trust in a provisioning party.
Working out this design, including revocation and transparency for build proofs, is future work.

\subsection{Updates and redeployment}
\label{sec:updates}

The fixed point is taken over the family as a whole, which has an operational consequence for updates.
Under the current transpilers, changing even one node changes every member's source and measurement because the shared blob records every body verbatim.
An update therefore requires re-transpiling and redeploying the \emph{entire} family.
In fact, the granularity can be refined.

Regard the template as a directed graph with an edge from $A$ to $B$ when the body of $A$ references the source of $B$: a change to $B$ can affect $A$ only if $B$ is reachable from $A$, so members with no directed path to $B$ can keep their code---and their measurements---across the update.
A transpiler exploiting this would resolve the fixed points per strongly connected component in reverse topological order, sharing a blob only within each component; \prettyref{cor:simultaneous} applies within each component, so no new theory is needed.
Fully mutual attestation is the worst case---its reference graph is strongly connected, so every update remains global---but asymmetric topologies would benefit.
Working out this refinement is beyond the scope of this paper.

\section{Related work}
\label{sec:related}

In this section, we compare our approach with existing approaches addressing or sidestepping the same reference-value bootstrapping problem.

\subsection{Trusted third parties (TTPs)}
\label{sec:notary}

The most direct way to avoid this problem is to move the reference values out of the nodes' measured artifacts: an external trusted third party---in RATS terms, a Reference Value Provider for the whole group---authorises and distributes the reference values, either through an online service or as signed values that nodes load at run time.
Since no reference value appears inside any node's measured code, the recursive dependency never arises.
The price is that every verification is only as trustworthy as the TTP: whoever controls it can authorise the reference value of arbitrary code, so it must be trusted on top of the attestation roots the nodes already rely on; an online provider must also remain available.

A guide to AWS Nitro Enclaves \citep{AWSBlog} states the mutual-reference problem directly: an enclave image's measurement (PCR0) is known only after the image is built, so an enclave can hard-code neither its own measurement nor its peers' in advance.
Its remedy is a \emph{Measurement Notary Service}---a trusted store of approved image measurements, gated by $K$-out-of-$N$ authorised signers, that relying enclaves query at run time.
This is exactly the TTP our construction does away with: at build time we give each node a fixed-point representation from which it derives the measurements, so no external reference-value provider is consulted.

Apache Teaclave \citep{Teaclave} encounters the same bootstrapping problem among its SGX enclave services: hard-coding each peer's \texttt{MRENCLAVE} would change the verifier enclave's own identity.
Teaclave instead hard-codes the public keys of trusted auditors, while enclave measurements and the auditors' signatures over them are loaded at run time.
This removes the need for an online notary on every verification, because an untrusted host can distribute the signed values, but the auditors remain trusted authorities for deciding which enclave identities are acceptable.
Our construction requires neither an online notary nor auditor-signed reference values, since each node derives the expected measurements from its built-in representation of the family.

\subsection{Self-attestation certificates}

The Decent Application Platform \citep{DECENT} composes distributed Intel SGX applications whose enclave components mutually authenticate through \emph{self-attestation certificates}: reusable, code-bound credentials verified via the Intel Attestation Service or DCAP.
The attestation root is not an \emph{additional} trusted party---SGX already presupposes trust in Intel---so DECENT correctly claims mutual attestation with no TTP beyond the hardware manufacturer.

At deployment, hosts and clients choose an \emph{authorisation list} (AuthList) of the code hashes permitted for each service.
A component loads its AuthList immutably; after local attestation, the Decent Server certifies the component's public key, code digest, and AuthList.
A remote component accepts it only if the certificate chain is valid, the code digest is authorised for the requested service in the remote component's own AuthList, and the two AuthLists are identical.
The party supplying a list need not be trusted, since substituting a different list prevents it from matching the honest application's certified policy.

DECENT and our construction eliminate circularity at different boundaries.
DECENT takes peer measurements as deployment inputs: an AuthList is loaded after \texttt{MRENCLAVE} has been fixed and is then bound by certificate to the component's key and code digest.
Consequently, adding mutually dependent measurements to the AuthLists changes neither component's \texttt{MRENCLAVE}.
The certificate proves which list a component loaded, but not that the listed measurements were derived from particular sources; hosts and clients supply those entries.

Our fixed-point construction instead makes each node reconstruct every peer's exact source and compute its reference value internally (\prettyref{sec:theory}), requiring no externally supplied list of measurements.
DECENT realises its approach with SGX attestation and a certificate hierarchy, whereas our source-level construction is unchanged across architectures and only the implementation of the measurement function varies.

\subsection{Hardware-rooted reference-value injection}

A second family injects reference values from outside the node but anchors their authenticity to the hardware root of trust: the hardware binds an externally supplied value into the attestation report without folding it into the measurement, so a verifier can check the value without trusting whoever injected it.
(When the field is too small for several measurements, it carries a digest of their concatenation, and the measurements themselves are supplied at run time.)

For Intel SGX, Momose, Qin, Sakurai, and Vij \citep{KSS} use the \emph{Key Separation and Sharing (KSS)} feature in two ways.
A fresh worker identifier bound to \texttt{CONFIG\_ID} lets the attester reject reused worker enclaves.
To resolve the mutual-attestation cycle, the attester hard-codes the worker's measurement, while the worker is launched with the attester's measurement bound to its own \texttt{CONFIG\_ID}; the worker uses this value to verify the attester, and the attester checks the same value in the worker's quote.
Because \texttt{CONFIG\_ID} is attested but does not change \texttt{MRENCLAVE}, this launch-time injection avoids a circular hard-coded dependency.

Confidential VMs offer analogous fields populated by the host at launch time---\texttt{MR\_CONFIG\_ID} on Intel TDX, \texttt{HOST\_DATA} on AMD SEV-SNP---so the host can realise the same scheme.
These fields are fixed at launch, so the host or launcher must supply the reference values before the node starts.

Both approaches avoid circular hard-coding by placing peer reference values outside the measured artifact.
Compared with our construction, they avoid carrying peer sources and rebuilding peer artifacts, and they allow the same measured binary to be launched with different peer policies.
Their cost is dependence on an architecture-specific report field and on an external launch path to supply the reference values; an incorrect value makes an honest peer reject, but the launch path must still cooperate.
Our nodes instead derive peer reference values from their built-in source representation, requiring neither an injection field nor launch-time input: the fixed-point construction remains unchanged across architectures, while the measurement function and the cost of evaluating it remain architecture-specific.

\subsection{Measurement-specific reference-value derivation}

MAGE \citep{MAGE} tackles the same problem directly, phrasing its recursive dependency in the same terms: for mutual trust two enclaves would each need the other's identity (\texttt{MRENCLAVE}) in its own initial data, yet embedding it changes that data and hence both identities.
MAGE breaks the cycle without a trusted third party by instrumenting the enclaves so each \emph{derives} the others' identities from its own initial data.
It stores every member's \emph{pre-measurement} (\texttt{MAINFO})---the \texttt{MRENCLAVE} hash state up to a reserved segment---in a shared reserved section (\texttt{MARS}), and exploits how SGX finalises \texttt{MRENCLAVE} over pages in load order to complete each peer's hash.

Realising this layout requires modifying the SGX toolchain.
MAGE extends Intel SGX SDK 2.6.100.51363 with a library that reserves \texttt{MARS} and exposes derivation APIs, a signing tool that extracts each enclave's \texttt{MAINFO} and populates \texttt{MARS}, and an enclave loader that loads \texttt{MARS} last \citep[Section~5.1]{MAGE}.
Thus its deployment uses a modified build and load path, not only an instrumented enclave application.

Structurally, MAGE's construction is a measurement-level analogue of ours: an auxiliary section byte-identical across the group (every member's pre-measurement) plus a per-member index, from which each identity is derived at run time---exactly the shape of the proof of \prettyref{cor:simultaneous} as implemented by \pyreflect{} (\prettyref{app:design}), a shared blob plus a per-node selector.
The difference lies in what the shared aggregate records: MAGE stores one-way intermediate hash states, so members can reproduce one another's measurements but not their code, and the circularity is cut by the incremental structure of \texttt{MRENCLAVE}---the shared section is measured last---rather than solved by Kleene's trick.

The technique is thus strongly tied to the measurement process: the MAGE paper argues it extends to other TEEs with similar incremental hashes---AMD SEV/SEV-SNP and ARM TrustZone are discussed, without detailed procedures \citep[Section~7.2]{MAGE}---but the derivation exploits the internal structure of one particular measurement process.
For each TEE type, an instantiation must expose a compatible intermediate hash state and enforce where the shared section enters the measurement stream; in practice, this requires adapting the corresponding SDK, image-construction or signing tool, and loader, as applicable.
A heterogeneous deployment must implement every participating TEE's measurement-specific path.

Both MAGE and our construction are TTP-free; the distinction is architectural rather than one of additional trust.
Our fixed-point construction instead operates unchanged at the source level for any measurement function.
An implementation supplies the architecture's own $\mu$ through its existing build and measurement interfaces; it requires no modification to the TEE's SDK, signing tools, loader, or attestation stack (\prettyref{sec:tpm}, \prettyref{sec:nitro}).

\subsection{Scalability of mutual attestation}
\label{sec:scalability}

Orthogonal to where reference values come from is how many attestations must run: na\"ive mutual attestation among $N$ nodes performs up to $N\left(N-1\right)/2$ pairwise handshakes.
Careful Whisper \citep{CarefulWhisper} spreads attestation results through a gossip protocol, reducing the overhead to linear under ideal conditions.
A simpler folklore remedy designates an attester node that vouches for the group, at the price of a distinguished component all nodes must trust.
Both introduce intermediary Verifiers, and both \emph{presuppose} that whoever verifies holds the right reference values---precisely the ingredient our construction supplies, and one that composes with either topology: the designated-attester star is simply an instance of the reference graphs that \prettyref{cor:simultaneous} handles.

\section{Conclusion}

The reference-value bootstrapping problem in mutual attestation is a system of mutual fixed-point equations~\eqref{eq:fixpoint}, and the simultaneous form of Kleene's second recursion theorem solves it---uniformly, so the mutually referencing nodes are generated automatically rather than written by hand.

When the architecture measures the deployed code itself, the construction is a transpilation followed by a hash: \pyreflect{} generates mutually referencing nodes, and a mutual-attestation PoC backed by a userland software TPM demonstrates the missing reference-value step.
When the architecture measures build artifacts, reproducible builds provide the source-to-measurement map: the \nixreflect{} PoC makes two Nitro Enclaves reproduce each other's image bit-for-bit and hence derive each other's reference PCR0--2 from built-in data alone.
For the simple PoC target, the source-to-image-to-PCR computation is approximately $100\times$ slower than hashing alone; because rebuilding traverses the workload's full build closure, this gap is expected to widen for more complex applications.

These results establish the feasibility of self-contained reference-value reconstruction; they neither replace nor strengthen the underlying attestation mechanism.
Each verifier inherits the trust model and security semantics of its architecture's one-way-attestation flow.

In its self-contained form, the construction removes the need for a trusted Reference Value Provider, an external launch-time injection channel, and an architecture-specific fixed-point mechanism.
Unlike MAGE, it needs no intermediate measurement state and no changes to a TEE's SDK, signing tools, loader, or attestation stack.

This generality has operational costs.
Artifact-measured nodes carry the reproducible-build closure and pay its run-time cost, and a change in a fully mutual family currently requires redeploying every member.
Attestable builds can remove the run-time toolchain and latency without reintroducing a trusted provisioning party, but add trust in the Build TEE's hardware root when it is not already among the deployed nodes' roots.
Managing the life cycle of attestable-build proofs and exploiting finer-grained dependency graphs for updates remain future work.

\bibliography{mutual-ra}

\appendix

\section{Open Science}
\label{sec:openscience}

\url{https://github.com/acompany-develop/mutual-attestation-kleene}

\section{Proofs of Kleene's recursion theorems}
\label{app:proofs}

\prettyref{thm:kleene} and \prettyref{cor:simultaneous} are classical results of recursion theory.
We give the proofs here because the transpilers of \prettyref{sec:transpiler} and \prettyref{sec:nitro} mirror their constructive self-application pattern; the syntactic correspondence is spelled out in \prettyref{app:design}.

\begin{proof}[Proof of \prettyref{thm:kleene}]
For $n,m \in\mathbb{N}$, let $S^{m}_{n}\colon \mathbb{N}\times\mathbb{N}^{m} \to \mathbb{N}$ be a computable function such that
\[
\varphi^{\left(n\right)}_{S^{m}_{n}\left( e, \vec{y} \right)} = \varphi^{\left(n+m\right)}_{e}\left({-}, \vec{y}\right).
\]
Note that such a function always exists by the $S^{m}_{n}$ theorem \citep{Kleene_1938}.
Consider the following computable partial function $g \colon \mathbb{N}^{n} \times \mathbb{N} \times \mathbb{N} \rightharpoonup \mathbb{N}$:
\[
g\left(\vec{x}, y, z\right) := \varphi^{\left(1+n\right)}_{z}\left( S^{1}_{n}\left( y, y \right), \vec{x} \right).
\]
Let $j$ be an index of $g$ with respect to $\varphi^{\left(n+2\right)}$.
The computable function $e\colon \mathbb{N} \to \mathbb{N}$ defined by
\[
e\left(i\right) := S^{1}_{n}\left(S^{1}_{n+1}\left(j, i\right), S^{1}_{n+1}\left(j, i\right)\right)
\]
satisfies the equation
\[
\varphi^{\left(n\right)}_{e\left(i\right)}
= \varphi^{\left(n+1\right)}_{i}\left(e\left(i\right), {-}\right).
\]
Consequently, choosing $i$ to be any $\varphi^{\left(n+1\right)}$-index of $f$ gives
$\varphi^{\left(n\right)}_{e\left(i\right)} = f\left(e\left(i\right), {-}\right)$. \qedhere
\end{proof}

\begin{proof}[Proof of \prettyref{cor:simultaneous}]
It can be reduced to the case $k = 1$ (\prettyref{thm:kleene}) by using the computable isomorphism $\mathbb{N}^{k} \cong \mathbb{N}$.
To clarify the connection with the implementation of our transpilers, we shall also provide a more concrete proof.

Define the computable partial function $F \colon \mathbb{N} \times \mathbb{N}^{n} \times \mathbb{N} \rightharpoonup \mathbb{N}$ by
\[
F\left( y, \vec{x}, i \right) := \begin{cases}
f_{i}\left(S^{1}_{n}\left(y, 1\right), \ldots, S^{1}_{n}\left(y, k\right), \vec{x}\right), & 1\leq i\leq k, \\
\text{undefined}, & \text{otherwise}.
\end{cases}
\]
\prettyref{thm:kleene} provides a $\varphi^{\left(n+1\right)}$--index $e \in \mathbb{N}$ such that
\[
\varphi^{\left(n+1\right)}_{e}\left(\vec{x}, i\right) = F\left(e, \vec{x}, i\right).
\]
For each $1\leq i \leq k$, we have
\begin{align*}
\varphi^{\left(n\right)}_{S^{1}_{n}\left(e, i\right)} \left(\vec{x}\right)
&= \varphi^{\left(n+1\right)}_{e}\left(\vec{x}, i\right) \\ 
&= F\left(e, \vec{x}, i\right) \\ 
&= f_{i}\left(S^{1}_{n}\left(e, 1\right), \ldots, S^{1}_{n}\left(e, k\right), \vec{x}\right).
\end{align*}
Hence the tuple $\vec{e} := \left(S^{1}_{n}\left(e, 1\right), \ldots, S^{1}_{n}\left(e, k\right)\right)$ gives a solution to the system of equations:
\[
\varphi^{\left(n\right)}_{e_{i}} = f_{i}\left( \vec{e}, {-} \right), \qquad 1 \leq i \leq k.\qedhere
\]
\end{proof}

\section{Design of the emitted nodes}
\label{app:design}

This appendix maps each stage of \pyreflect{} (\prettyref{sec:transpiler}) to the constructive pattern in the proof of \prettyref{thm:kleene} (\prettyref{app:proofs}) and records the emitted file layout.
\Nixreflect{} (\prettyref{sec:nitro}) uses the same layout, substituting Nix expressions for Python source.
Figure~\ref{fig:layout} depicts the layout of the emitted files and the run-time reproduction mechanism; the paragraphs below map each element onto the proof.

\begin{figure}
\centering
\begin{tikzpicture}[font=\scriptsize, >=stealth,
  seg/.style={draw, minimum width=2.75cm, minimum height=3.4ex,
              inner sep=2pt, align=center, outer sep=0pt},
  shared/.style={seg, fill=black!12},
  own/.style={seg, fill=white}]
\node[shared]              (a1) at (0,0)      {header};
\node[own,   anchor=north] (a2) at (a1.south) {\texttt{SELF = 'A'}};
\node[shared,anchor=north] (a3) at (a2.south) {\texttt{DATA = repr($\mathcal{D}$)}};
\node[shared,anchor=north, minimum height=6.2ex]
                           (a4) at (a3.south) {framework\\[-1pt] \texttt{render($\tau$)}};
\node[own,   anchor=north] (a5) at (a4.south) {body $\hat{P}_{A}$};
\node[draw, thick, inner sep=2.5pt, fit=(a1)(a5),
      label={[font=\scriptsize\bfseries]above:node $A$}] (fA) {};
\node[shared]              (b1) at (4.6,0)    {header};
\node[own,   anchor=north] (b2) at (b1.south) {\texttt{SELF = 'B'}};
\node[shared,anchor=north] (b3) at (b2.south) {\texttt{DATA = repr($\mathcal{D}$)}};
\node[shared,anchor=north, minimum height=6.2ex]
                           (b4) at (b3.south) {framework\\[-1pt] \texttt{render($\tau$)}};
\node[own,   anchor=north] (b5) at (b4.south) {body $\hat{P}_{B}$};
\node[draw, thick, inner sep=2.5pt, fit=(b1)(b5),
      label={[font=\scriptsize\bfseries]above:node $B$}] (fB) {};
\draw[->] (fA.east |- a4) to[bend left=32]
  node[above, pos=0.65] {\texttt{render('B')}} (fB.west |- b1);
\draw[->] (fB.west |- b4) to[bend left=32]
  node[below, pos=0.55] {\texttt{render('A')}} (fA.east |- a5);
\node[align=center] at (2.3,-3.05)
  {$\mathcal{D} = \langle \mathit{nodes}, \mathit{bodies}\ (\hat{P}_{A}, \hat{P}_{B}), \mathit{header}, \mathit{framework} \rangle$};
\end{tikzpicture}
\Description{Two boxes, one per node, each stacked from five segments: header, selector line, data line, framework, and body. Shaded segments are identical across nodes; arrows labelled render show each node reconstructing the other's file.}
\caption{Layout of the emitted nodes for a two-node family (identifier prefixes \texttt{\_\_pyreflect\_} omitted).
Shaded segments are byte-identical across all nodes, and their content is exactly what the shared blob $\mathcal{D}$ records---the index $j$ in the proof of \prettyref{thm:kleene}; white segments are node-specific (the bodies, too, are recorded in $\mathcal{D}$).
Calling the framework's \texttt{render} with a node-id reassembles that node's entire file: header, framework, and body come from $\mathcal{D}$, the selector line comes from the argument, and the \texttt{DATA} line---the one line $\mathcal{D}$ cannot contain---is regenerated as \texttt{repr} of the running node's own blob, mirroring the self-application $S^{1}_{n}\left(j, j\right)$.
Each node thus reproduces every member's file byte-for-byte, itself included (\texttt{render('A')} within $A$ yields $A$'s own file).}
\label{fig:layout}
\end{figure}

\emph{A minimal template.}
A two-node template has the following shape, where each body is arbitrary Python code that may refer to members' sources through their node-ids:
\begin{verbatim}
[ {"node": "__NODE1",
   "body": "... __NODE2 ..."},
  {"node": "__NODE2",
   "body": "... __NODE1 ..."} ]
\end{verbatim}
Within a body, the node-ids \texttt{\_\_NODE1} and \texttt{\_\_NODE2} stand for the (to-be-computed) source code of the respective nodes; what the bodies \emph{do} with those sources---the PoCs hash them---is immaterial to the construction below.

\emph{Rewriting---from $f$ to $g$.}
The proof never hands a program its index outright.
For a fixed behaviour $f = \varphi^{\left(1+n\right)}_{z}$, its function $g$ (\prettyref{app:proofs}) replaces $f$'s index argument with the term $S^{1}_{n}\left(y, y\right)$---\emph{reconstructed} from a parameter $y$ by self-application instead of received as an input.
\Pyreflect{} makes the same move syntactically: every occurrence of a node-id $N'$ in a body is rewritten into the call \texttt{\_\_pyreflect\_render\_\_('}$N'$\texttt{')}, which recomputes the source of $N'$ at run time from embedded data.
Writing $\hat{P}_{N}$ for the rewritten body, the $\hat{P}_{N}$ play the role of $g$.

\emph{The data blob---the index $j$.}
The transpiler then serialises the whole rewritten family into a single value, the \emph{blob}: the base64 encoding of the canonical JSON record
\[
\mathcal{D} = \langle \mathit{nodes}, \mathit{bodies}, \mathit{header}, \mathit{framework} \rangle,
\]
where $\mathit{bodies}$ maps each $N \mapsto \hat{P}_{N}$ and the header (imports) and framework are constants shared by all nodes.
$\mathcal{D}$ is a complete description of the family short of the self-reference---an index $j$ of $g$.
That one blob serves all $n$ nodes is \prettyref{cor:simultaneous} at work: the pairing $\mathbb{N}^{n} \cong \mathbb{N}$ becomes the shared $\mathcal{D}$, and the projection recovering $e_{i}$ becomes a per-node selector line.

\emph{Emission---the self-application $e := S^{1}_{n}\left( j, j \right)$.}
Every emitted node file has the same five-part layout: the header; the selector line
\[
\mathtt{\_\_pyreflect\_SELF\_\_} = \mathtt{'}N\mathtt{'};
\]
the line
\[
\mathtt{\_\_pyreflect\_DATA\_\_} = \mathtt{repr}\left(\mathcal{D}\right);
\]
the framework; and the body $\hat{P}_{N}$.
The header, framework, and body $\hat{P}_{N}$ are the parts shared with, or drawn from, $\mathcal{D}$; only the selector line is node-specific.
Since the header, framework, and bodies are all recorded in $\mathcal{D}$, embedding \texttt{repr(}$\mathcal{D}$\texttt{)} in a file otherwise laid out from $\mathcal{D}$ is exactly the partial application of the program to its own index---the step $e := S^{1}_{n}\left( j, j \right)$ that gives a fixed point.

\emph{Run time---verifying $\varphi_{e} = f\left( e, {-} \right)$.}
The framework reconstructs the source of any target node $\tau$ by
\begin{verbatim}
def __pyreflect_render__(target):
  d = __pyreflect_blob__()
  return (d["header"]
    + "__pyreflect_SELF__ = "
    + repr(target) + "\n"
    + "__pyreflect_DATA__ = "
    + repr(__pyreflect_DATA__)
    + "\n"
    + "\n" + d["framework"] + "\n"
    + d["bodies"][target])
\end{verbatim}
where \texttt{\_\_pyreflect\_blob\_\_} decodes the base64-encoded JSON embedded in the code.
The header, framework, and body come straight from $\mathcal{D}$, and the selector line from \texttt{target}; the one line $\mathcal{D}$ cannot contain---its own \texttt{\_\_pyreflect\_DATA\_\_} line---is regenerated as \texttt{repr} of the running node's \emph{own} blob.

This is correct on two counts: the canonical serialisation makes the blob byte-identical across nodes, so the substituted value equals the one in $\tau$'s file for any $\tau$; and \texttt{\_\_pyreflect\_render\_\_} is kept structurally identical to the layout function the transpiler itself used at build time, so the reconstruction reproduces the deployed file verbatim, itself included.
This replay realises the defining equation $\varphi_{e} = f\left( e, {-} \right)$ of \prettyref{thm:kleene}.

\section{Complete benchmark environment}
\label{app:bench-env}

\prettyref{tab:bench-env-full} records the complete host and software configuration used for the benchmark in \prettyref{sec:cost}.

\begin{table}
\caption{Complete environment for the Nitro Enclaves cost experiment.}
\label{tab:bench-env-full}
\small
\begin{tabularx}{\columnwidth}{@{} l >{\raggedright\arraybackslash}X @{}}
\toprule
Item & Value \\
\midrule
Instance type & \texttt{m6g.xlarge}: $4$ vCPUs, $16$\,GiB RAM; AWS Graviton2 (Arm Neoverse-N1) processor \\
AMI & \nolinkurl{ubuntu/images/hvm-ssd-gp3/ubuntu-resolute-26.04-arm64-server-20260604} (\texttt{ami-04d0f56e9ce314a8e}, \texttt{us-east-2}) \\
OS / kernel & Ubuntu 26.04 LTS / Linux 7.0.0-1010-aws \\
Host storage & $64$\,GiB \\
Nitro Enclaves feature & enabled \\
Allocator pool & \texttt{memory\_mib: 8192}, \texttt{cpu\_count: 2} \\
Per-enclave allocation & \texttt{--memory 2048 --cpu-count 2} \\
\texttt{aws-nitro-util} & master branch (commit: b529ed6299a49ebe362d3cf618b21d6dac4a2e48) \\
\texttt{nitro-cli} & version 1.4.5 (commit: 18a5f6f35f110c0f235f193ae3caff9434d64ee1) \\
Docker & version 29.1.3; Ubuntu package \texttt{docker.io} 29.1.3-0ubuntu4.1 \\
Nix (host) & Determinate Nix 3.21.8 (Nix 2.34.8) \\
\bottomrule
\end{tabularx}
\end{table}

\end{document}